\begin{document}
\mainmatter              
\title{Negative correlations in Ising models of credit risk}
\titlerunning{Negative correlations in credit risk}  
%
\author{Chiara Emonti\inst{1} \and Roberto Fontana\inst{2}
}
\authorrunning{Chiara Emonti and Roberto Fontana} 
%
\tocauthor{Chiara Emonti, Roberto Fontana}
\institute{Politecnico di Torino, Corso Duca degli Abruzzi 24, Torino, Italy\\
\email{s316686@studenti.polito.it}
\and
Politecnico di Torino, Corso Duca degli Abruzzi 24, Torino, Italy\\
\email{roberto.fontana@polito.it}}

\maketitle              

\begin{abstract}
We analyze a subclass of Ising models in the context of credit risk, focusing on Dandelion models when the correlations $\rho$ between the central node and each non-central node are negative. We establish the possible range of values for $\rho$ and derive an explicit formula linking the correlation between any pair of non-central nodes to $\rho$. The paper concludes with a simulation study.
\keywords{Ising model, Negative correlation, Credit risk, Distribution of the portfolio loss}
\end{abstract}
\section{Credit Risk}
Credit risk arises in the context of financing and is one of the most important factors in determining the price of financial securities. It is also considered a key element in the daily investment decisions of investors, whether banks or private individuals. As highlighted in \cite{porretta2021integrated}, credit risk refers to the possibility that a counter-party fails to fully meet its obligations, both in terms of interest and principal. In such cases, the counter-party is said to be in default. A higher credit risk leads to a higher required interest rate.  
We consider a portfolio of $N$ credits. For each credit $i$, $i=1,\ldots,N$, three key quantities are defined: Probability of Default (PD), Loss Given Default (LGD), and Exposure at Default (EAD). By combining these quantities, we obtain the random variable $L_i'$, which represents the loss associated with the $i$-th credit, $L_i' = EAD_i \cdot LGD_i \cdot L_i$, where $L_i \in \{0,1\}$ is a Bernoulli-distributed variable with parameter $PD_i$, indicating whether a default occurs for the $i$-th credit. The total credit portfolio loss $L'$ is then given by $L'=L_1'+\ldots+L_N'$.
In this work, we assume that both $EAD_i$ and $LGD_i$ are constant. Under this assumption, the total loss $L'$, apart from a negligible constant term, simplifies to  
\begin{equation}
L=\sum_{i=1}^{N} L_i
\end{equation}

%
As stated in \cite{molins2016model}, a credit portfolio model is a theoretical construct that outputs the probability distribution of losses for a given credit portfolio. The continuous evolution of financial methodologies has exposed the limitations of traditional models in accurately assessing risks. In particular, none of these models was able to adequately capture extreme risks during the US subprime crisis. This shortcoming arises because traditional portfolio models fail to account for contagion effects.  
In this work, we study a class of models that incorporate contagion effects, \cite{molins2016model}. Financial contagion is defined as "a significant increase in co-movements of prices and quantities across markets, conditional on a crisis occurring in one market or group of markets", \cite{pericoli2003primer}. Readers interested in the subject can  refer to \cite{molins2016model} and \cite{fontana2021model} and the references therein.

\section{Jungle Model}

In the literature, the Jungle model is known as the Ising model, originally derived from statistical mechanics. This model is used to model credit contagion by generating double-peak probability distributions for credit loss and endogenously producing quasi-phase transitions, which are examples of systemic credit crises that arise suddenly without an apparent cause, \cite{molins2016model}.  

The Jungle model provides the optimal probability distribution for modeling losses in a general credit portfolio, based on two assumptions: (i) the principle of Maximum Entropy (MaxEnt) and (ii) the premise that all empirical information of a given credit portfolio can be summarized by the probability of default and the correlations among defaults of its components, \cite{molins2016model}. This approach demonstrates how probability distributions can be determined from partial information. By utilizing the available data and ensuring that the unknown distribution maximizes Shannon entropy, the probability distribution that best describes the given scenario can be derived.  
Following \cite{molins2016model}, the MaxEnt principle seeks to find the probability distribution $P(x)$, defined over a finite state space $\Omega$, that maximizes entropy while satisfying $m$ constraints, given by:
\begin{equation}
\sum_{x \in \Omega} P(x) f_k(x) = F_k 
\end{equation}
where $f_k(x)$ is a function defined on $\Omega$, and $F_k$ is a given constant, for $k=1,\ldots,m$, with $m < \#\Omega$. In addition, the standard constraint $\sum_{x \in \Omega} P(x) = 1$ must be satisfied. The solution to this problem is given by
\begin{equation}
P(x)=\frac{1}{Z(\lambda_1,\ldots,\lambda_m)} \exp \left(-\sum_{i=1}^{m}\lambda_i f_i(x)\right)
\end{equation} 
where $Z$ is the partition function $Z(\lambda_1,\ldots,\lambda_m)=\sum_{x \in \Omega} \exp(\sum_{i=1}^{m} \lambda_i f_i (x))$. The parameters $\lambda_1,\ldots,\lambda_m$ are determined by solving the equations $F_k=-\frac{\partial \log Z}{\partial\lambda_k}$, $k=1,\ldots,m$.

\subsection{The Dandelion Model}
The Dandelion model is a member of the Jungle model class and represents a situation in which a central element of the credit portfolio is linked to each of the other non-central nodes. When the correlation between the default indicator of the central node and the default indicator of each non-central node is positive, the model mimics the relationship between a bank and its borrowers, or between a central bank and the rest of the economy.

The probability function of the Dandelion model is given by:
\begin{equation} \label{eq:p0N}
P(l_0, l_1, \ldots, l_N) = \frac{1}{Z} \exp \left( \alpha_0 l_0 + \alpha \sum_{i=1}^{N} l_i + \beta \sum_{i=1}^{N} l_0 l_i \right)
\end{equation}
where the parameters $\alpha_0$, $\alpha$, and $\beta$ are determined to satisfy the requirements $E[L_0] = p_0$, $E[L_i] = p_d$, and $E[L_0 L_i] = q$ for $i = 1, \ldots, N$, and $Z=Z(\alpha_0,\alpha,\beta)$ is the normalizing function. In the case where all default probabilities are equal, $p_0 = p_d = p$, using \cite{molins2016model}, we obtain:
\begin{equation}\label{eq:param}
\begin{aligned}
\alpha=\log \left(\frac{p-q}{1-2p+q}\right) , \;
\alpha_0=(N-1)\log\left(\frac{1-p}{p}\right)+N\alpha, \;
\beta=\log\left(\frac{q}{p-q}\right)-\alpha , \\
Z=(1+\exp(\alpha))^N+\exp (\alpha_0)(1+\exp(\alpha+\beta))^N
\end{aligned}
\end{equation}

From Eq.(\ref{eq:p0N}) we can derive the expression of the probability mass function of the loss $L=L_1+\ldots+L_N$:
\begin{equation}
  P(L=l)=\frac{1}{Z} \binom{N}{l} \left( \exp(\alpha l)+\exp(\alpha_0+l(\alpha+\beta)) \right), \; l=0,\ldots, N
\end{equation}
We recall the relation between the second order moment $q=E[L_0L_i]$ and the correlation $\rho=\rho(L_0,L_i)$: $\rho=\frac{q-p^2}{p(1-p)}, q=\rho p (1-p)+p^2$, $i=1,\ldots,N$.

%
\subsection{Negative correlations in Dandelion model}
In this work, we explore the case where the correlation $\rho$ between the default indicator of the borrower $L_0$ and the default indicator of the $i$-th creditor $L_i$ is negative, $i=1,\ldots,N$. The paper \cite{molins2016model} considers the Dandelion model only in the case of $\rho$ greater than zero. 
The status of non-central nodes may be negatively correlated with that of the central node. For example, consider a case where the central node represents an oil-trading company aiming to expand by acquiring $N$ companies in the renewable energy sector. In this scenario, it seems reasonable that the statuses of the central and non-central nodes would be negatively correlated.

We know that the second-order moment $q$ can take values in the interval $[0, p]$, i.e., $0 \leq q \leq p$. It follows that $\rho$ is bounded:
\begin{equation} \label{eq:genrobound}
    -\frac{p}{1-p} \leq \rho \leq 1
\end{equation}

In addition, the parameters $\alpha$, $\alpha_0$, and $\beta$ are defined using the logarithm. Proposition \ref{prop:rangelog} provides the range of admissible values for the correlation $\rho$.

\begin{proposition} \label{prop:rangelog}
Given $0<p<1$ and $0<q<p$, the correlation $\rho$ must satisfy the constraints
\[
\rho > \max( -\frac{p}{1-p}, -\frac{1-p}{p}), \quad \rho < 1
\]
\end{proposition}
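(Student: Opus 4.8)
The plan is to turn the requirement that $\alpha$, $\alpha_0$, and $\beta$ in Eq.(\ref{eq:param}) be well-defined real numbers into inequalities on $q$, and then push these through the affine change of variables $q = \rho\,p(1-p) + p^2$ to obtain inequalities on $\rho$.

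First I would inspect the three logarithms in Eq.(\ref{eq:param}). The argument $\frac{q}{p-q}$ is automatically positive under the standing hypotheses $0<q<p$, and $\frac{1-p}{p}$ is automatically positive for $0<p<1$, so neither imposes anything new. The only genuine extra condition comes from $\alpha = \log\big(\frac{p-q}{1-2p+q}\big)$: since $p-q>0$ we must have $1-2p+q>0$, i.e.\ $q > 2p-1$. Together with the hypotheses this pins down the admissible range for $q$, namely $\max(0,\,2p-1) < q < p$.

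Next I would substitute $q = \rho\,p(1-p)+p^2$ into each of these inequalities. The condition $q>0$ becomes $\rho > -\frac{p^2}{p(1-p)} = -\frac{p}{1-p}$; the condition $q<p$ becomes $\rho\,p(1-p) < p(1-p)$, hence $\rho < 1$ (dividing by the positive quantity $p(1-p)$ preserves the direction); and the condition $q>2p-1$ becomes $\rho\,p(1-p) > 2p-1-p^2 = -(1-p)^2$, hence $\rho > -\frac{1-p}{p}$. Combining the two lower bounds gives $\rho > \max\big(-\frac{p}{1-p},\,-\frac{1-p}{p}\big)$, which is exactly the claim, while $\rho<1$ recovers the upper end of the general bound Eq.(\ref{eq:genrobound}), now as a strict inequality because $q<p$ is strict.

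There is essentially no hard step here; the only things to be careful about are not overlooking that two of the three logarithm arguments are harmless, and using that $p(1-p)>0$ so the direction of each inequality is preserved under the change of variables. One may also remark on which lower bound is active: $-\frac{p}{1-p}$ dominates when $p \le \tfrac12$ (so the binding constraint is $q>0$) and $-\frac{1-p}{p}$ dominates when $p \ge \tfrac12$ (so the binding constraint is $q>2p-1$), the two coinciding at $p=\tfrac12$, where $2p-1=0$.
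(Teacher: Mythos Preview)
Your proof is correct and follows essentially the same route as the paper: you identify that only the argument of $\alpha$ imposes a new constraint ($q>2p-1$), translate it via $q=\rho p(1-p)+p^2$ into $\rho>-\frac{1-p}{p}$, and combine with the bounds coming from $0<q<p$, which is exactly what the paper does (the paper just invokes Eq.~(\ref{eq:genrobound}) rather than re-deriving the $q>0$ and $q<p$ bounds). Your extra remark on which lower bound is binding according to whether $p\lessgtr\tfrac12$ is a nice addition not present in the paper.
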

\begin{proof}
Let us consider $\alpha=\log(\frac{p-q}{1-2p+q})$. Since $p-q>0$, it must be that $1-2p+q>0$. We obtain
\[
1-2p+q>0 \Leftrightarrow q > 2p-1 \Leftrightarrow \rho p(1-p)+p^2>2p-1 \Leftrightarrow \rho > - \frac{1-p}{p}
\]
It is easy to verify that $\alpha_0$ and $\beta$ do not impose additional constraints. Therefore, from Eq.(\ref{eq:genrobound}), the thesis follows.
\end{proof}

We know that $E[L_0]=E[L_i]=p$ and $E[L_0 L_i]=q$, $i=1,\ldots,N$. Now we study the second-order moment $E[L_iL_j]$ and the correlation $\rho_{i,j}$ between two non-central creditors $i$ and $j$, with $i\neq j$ and $i,j \in {1,\ldots, N}$.

\begin{proposition} \label{prop:secmom}
Let $\rho$ ($\rho_{i,j}$) denote the correlation between $L_0$ and $L_k$ ($L_i$ and $L_j$, respectively) where, $i,j,k \in \{1,\ldots,N\}, i \neq j$. It follows that
\[
\rho_{i,j}=\rho^2
\]
\end{proposition}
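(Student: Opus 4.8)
The plan is to compute $E[L_iL_j]$ directly from the joint probability mass function \eqref{eq:p0N} by summing over all configurations, exploiting the symmetry among the non-central nodes. First I would condition on the state of the central node $l_0 \in \{0,1\}$. Given $l_0$, the non-central indicators $L_1,\ldots,L_N$ become conditionally independent, since the exponent in \eqref{eq:p0N} splits as a sum of terms each involving a single $l_i$ (the term $\alpha l_i + \beta l_0 l_i = (\alpha + \beta l_0) l_i$). Hence, conditionally on $L_0 = l_0$, each $L_i$ is Bernoulli with a success probability that depends only on $l_0$; call these $p_{(0)}$ (when $l_0=0$) and $p_{(1)}$ (when $l_0=1$). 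Concretely $p_{(0)} = \frac{e^{\alpha}}{1+e^{\alpha}}$ and $p_{(1)} = \frac{e^{\alpha+\beta}}{1+e^{\alpha+\beta}}$, and from \eqref{eq:param} these simplify to $p_{(0)} = \frac{p-q}{1-p}$ and $p_{(1)} = \frac{q}{p}$.

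Next I would use the tower property. Writing $\pi_1 = P(L_0 = 1) = p$ and $\pi_0 = 1-p$, conditional independence gives
\[
E[L_iL_j] = \pi_0\, p_{(0)}^2 + \pi_1\, p_{(1)}^2,\qquad
E[L_i] = \pi_0\, p_{(0)} + \pi_1\, p_{(1)} = p,
\]
the latter serving as a consistency check. Substituting $p_{(0)} = \frac{p-q}{1-p}$ and $p_{(1)} = \frac{q}{p}$, I would simplify $E[L_iL_j] = (1-p)\left(\frac{p-q}{1-p}\right)^2 + p\left(\frac{q}{p}\right)^2 = \frac{(p-q)^2}{1-p} + \frac{q^2}{p}$. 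Then $\rho_{i,j} = \frac{E[L_iL_j] - p^2}{p(1-p)}$, and after clearing denominators the numerator $E[L_iL_j] - p^2$ should collapse to $\frac{(q-p^2)^2}{p(1-p)}$, whence $\rho_{i,j} = \left(\frac{q-p^2}{p(1-p)}\right)^2 = \rho^2$, using the relation $\rho = \frac{q-p^2}{p(1-p)}$ recalled in the text.

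The only real obstacle is the algebraic simplification in the last step: one must verify that $\frac{(p-q)^2}{1-p} + \frac{q^2}{p} - p^2$ factors exactly as $\frac{(q-p^2)^2}{p(1-p)}$. This is a routine but slightly delicate identity; bringing everything over the common denominator $p(1-p)$ gives numerator $p(p-q)^2 + (1-p)q^2 - p^3(1-p)$, which should expand and regroup into $(q - p^2)^2$. I would double-check this expansion carefully, since a sign slip there would be the easiest place for the argument to go wrong. Everything else — the conditional independence given $l_0$, the identification of $p_{(0)}$ and $p_{(1)}$ via \eqref{eq:param}, and the tower property — is straightforward. Note also that this derivation makes no use of the sign of $\rho$, so the result holds throughout the admissible range of Proposition \ref{prop:rangelog}.
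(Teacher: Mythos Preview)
Your proof is correct and follows essentially the same route as the paper: both arguments split the computation of $E[L_iL_j]$ according to the value of $l_0$, obtain the two contributions $\frac{(p-q)^2}{1-p}$ and $\frac{q^2}{p}$, and then reduce $\rho_{i,j}$ to $\left(\frac{q-p^2}{p(1-p)}\right)^2=\rho^2$. The only difference is presentational---you invoke conditional independence and the tower property, while the paper carries out the equivalent factorized summation over $(l_3,\ldots,l_N)$ explicitly---but the underlying computation is identical.
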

 \begin{proof}
  From Eq.(\ref{eq:p0N}) we obtain
  \[
  P'(l_1,\ldots,l_N)=\sum_{l_0=0}^{1} P(l_0,l_1,l_2,...,l_N)=\frac{1}{Z}\left[\exp\left(\alpha \sum_{i=1}^Nl_i\right) + \exp\left(\alpha_0+(\alpha+\beta)\sum_{i=1}^Nl_i\right)\right]
  \]
  Without loss of generality, we consider $i=1$ and $j=2$. We obtain
  \begin{eqnarray*}
    E[L_1 L_2]=P'(L_1=1,L_2=1)=\sum_{(l_3,\ldots,l_N)\in V_{N-2}}P'(1,1,l_3,\ldots,l_N)= \\
    \sum_{(l_3,\ldots,l_N) \in V_{N-2}}\frac{1}{Z}\left[\exp\left(2\alpha+\alpha \sum_{i=3}^Nl_i\right) + \exp\left(\alpha_0+2(\alpha+\beta)+(\alpha+\beta)\sum_{i=3
    }^Nl_i\right)\right]      
  \end{eqnarray*}
  where $V_{N-2}=\{0,1\}^{(N-2)}$. 
  Using the parameter expressions from Eq.~(\ref{eq:param}), the first addendum of the above expression becomes
 \begin{eqnarray*}
  \sum_{(l_3,\ldots,l_N)}\frac{1}{Z}\left[\exp\left(2\alpha+\alpha \sum_{i=3}^Nl_i\right)\right]
  =\frac{1}{Z}e^{2\alpha} \sum_{(l_3,\ldots,l_N) \in V_{N-2}} e^{\alpha (l_3+\ldots+l_N)}= \\
  =\frac{1}{Z}e^{2\alpha} \sum_{l_3=0}^1 e^{\alpha l_3} \ldots \sum_{l_N=0}^1 e^{\alpha l_N} 
  =\frac{1}{Z}e^{2\alpha} (1+e^\alpha)^{N-2}=\frac{(p-q)^2}{1-p}
 \end{eqnarray*}
With similar computations for the second addendum, we obtain
\[
\sum_{(l_3,\ldots,l_N) \in V_{N-2}}\frac{1}{Z}\left[ \exp\left(\alpha_0+2(\alpha+\beta)+(\alpha+\beta)\sum_{i=3
    }^Nl_i\right)\right]=\ldots=\frac{q^2}{p} 
\]
It follows
\[
\rho_{12}=\frac{E[L_1L_2]-p^2}{p(1-p)}=\frac{(p^2-q)^2}{p^2(1-p)^2}
\]
Since $\rho=\frac{q-p^2}{p(1-p)}$ the proof is completed.
\end{proof}
It follows that the correlation between two non-central nodes is always positive and smaller than $|\rho|$.  

\section{Simulation}

We consider the case with $p = 0.4$ and $N = 100$. The probability of default is high and somewhat unrealistic, but as in \cite{molins2016model}, it is a good example for illustrative purposes. From Proposition \ref{prop:rangelog}, we find that the range of correlations covered by the model is $-\frac{2}{3} < \rho < 1$. From Proposition \ref{prop:secmom}, we obtain that in the case of negative (positive) correlations, $\rho < 0$ ($\rho > 0$), the correlation $\rho_{ij}$ will lie in the range $(0,\frac{4}{9})$ (or $(0, 1)$), respectively. 
The loss distribution presents two peaks in the case of both negative and positive correlation $\rho$. Figure \ref{fig:2pmf} shows two loss distributions: one with positive correlation $\rho = \rho_0$ and the other with negative correlation $\rho = -\rho_0$ with $\rho_0=0.26$. The distribution for positive $\rho$ has the first (highest) peak at low losses and the second (lowest) peak at high losses. In contrast, for negative $\rho$, the opposite is true. Therefore, it appears that when the correlations are negative, a serious portfolio loss has a high probability of occurring.

\begin{figure} 
\begin{minipage}[c]{0.4\linewidth}
\includegraphics[width=\linewidth]{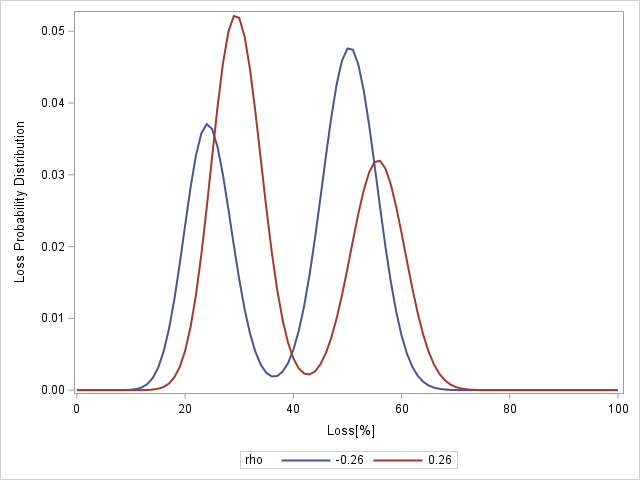}
\caption{Probability Loss Distribution for $\rho=\rho_0$ and $\rho=-\rho_0$ with $\rho_0=0.26$}
\label{fig:2pmf}
\end{minipage}
\hfill
\begin{minipage}[c]{0.4\linewidth}
\includegraphics[width=\linewidth]{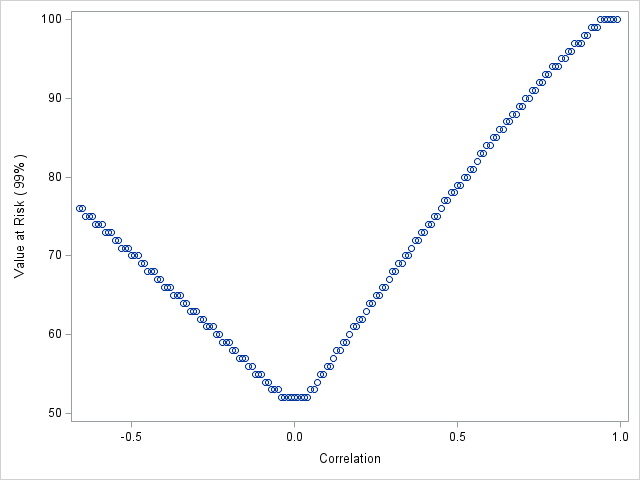}
\caption{Value at Risk (99\%) \emph{vs} Correlation}
\label{fig:var}
\end{minipage}%
\end{figure}

In Figure \ref{fig:var}, the relationship between the Value at Risk (99\%) and the correlation is shown. The graph is not perfectly symmetrical, even though it appears to be \emph{mirrored} around zero.
To further study this phenomenon, we considered the modes of the loss distributions. The graph in Figure \ref{fig:2figsmode} shows the relationship between the mode of the loss distribution and the default correlation $\rho$. It is evident that a discontinuity exists: there is a \emph{jump} at the value $\rho_\star \approx -0.4$. 
Upon further analysis of the mode's trend, we observe that initially, when $\rho$ is close to the lower bound ($\rho \approx - \frac{2}{3}$), the mode is almost zero. As $\rho$ increases, the mode starts to increase, and when $\rho$ is close to $\rho_\star$ it \emph{jumps} to its maximum value, around $60$. After reaching this maximum, it decreases almost linearly to $0$. 
This sudden change in the mode can be interpreted as a quasi-phase transition. It is noteworthy that a small change in the default correlation between credits leads to a drastic and sudden change in the mode's behavior. We also observe that at $\rho = 0$, the mode appears to stabilize. The mode's stationarity at $\rho = 0$ occurs because the loss distribution approaches that of the binomial distribution when the default correlations are close to zero. When $\rho = 0$, the loss distribution is exactly binomial with parameters $N$ and $p$.
In Figure \ref{fig:2figsprobmode} the mode probability is shown. We notice that it's high at extreme values of correlation and when the correlation is close to zero; for the other values of correlation it has a low value.

\begin{figure} 
\begin{minipage}[c]{0.4\linewidth}
\includegraphics[width=\linewidth]{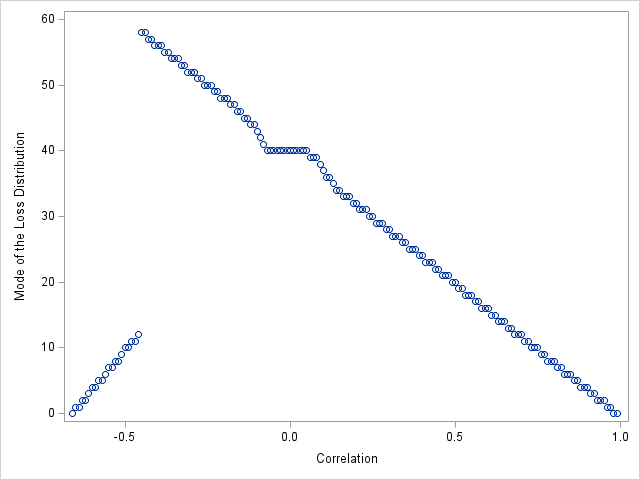}
\caption{Mode \emph{vs} Correlation}
\label{fig:2figsmode}
\end{minipage}
\hfill
\begin{minipage}[c]{0.4\linewidth}
\includegraphics[width=\linewidth]{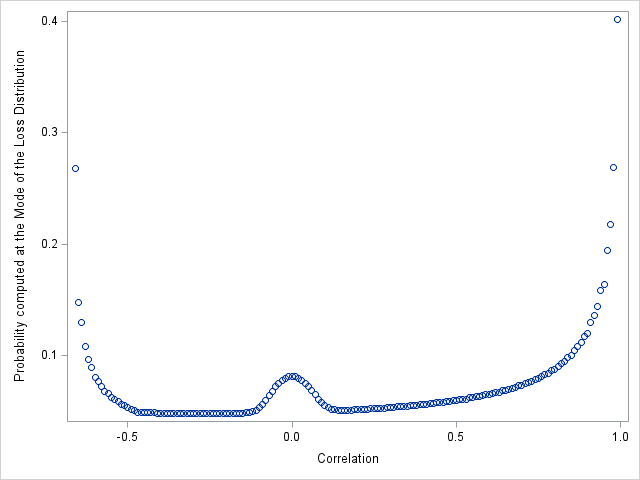}
\caption{Mode Probability \emph{vs} Correlation}
\label{fig:2figsprobmode}
\end{minipage}%
\end{figure}

\section{Acknowledgments}
This work extends some of the results presented in the master's thesis \cite{tesiemonti}.
%
%
%

%
\bibliographystyle{spmpsci}
\bibliography{biblio.bib}

\end{document}